\newcommand{\defproblemu}[3]{
  \vspace{1mm}
\noindent\fbox{
  \begin{minipage}{0.98\textwidth}
  #1 \\
  {\bf{Input:}} #2  \\
  {\bf{Question:}} #3
  \end{minipage}
  }
  \vspace{1mm}
}
\newcommand{\defparproblemu}[4]{
  \vspace{1mm}
\noindent\fbox{
  \begin{minipage}{0.98\textwidth}
  \begin{tabular*}{0.98\textwidth}{@{\extracolsep{\fill}}lr} #1 & {\bf{Parameter:}} #3 \\ \end{tabular*}
  {\bf{Input:}} #2  \\
  {\bf{Question:}} #4
  \end{minipage}
  }
  \vspace{1mm}
}
\newcommand{\mcname}{{\sc{Multiway Cut}}\xspace}
\newcommand{\nmcname}{{\sc{Node}} \mcname\xspace}
\newcommand{\nmcacutname}{{\sc{NMWC-a-Cut}}\xspace}
\newcommand{\nmcalpname}{{\sc{NMWC-a-LP}}\xspace}
\newcommand{\vcname}{{\sc{Vertex Cover}}\xspace}
\newcommand{\vcammname}{\vcname{} {\sc{above Maximum Matching}}\xspace}
\newcommand{\asatname}{{\sc{Almost 2-SAT}}\xspace}
\newcommand{\nmulticutname}{{\sc{Node Multicut}}\xspace}
\newcommand{\nmulticutalpname}{{\sc{NMC-a-LP}}\xspace}
\newcommand{\pp}{{L}}
\newcommand{\lpcon}{{C}}
\newtheorem{theorem}{Theorem}[section]
\newtheorem{lemma}[theorem]{Lemma}
\newtheorem{definition}[theorem]{Definition}
\newtheorem{corollary}[theorem]{Corollary}
\newtheorem{reduction}{Reduction}
\newtheorem*{branch}{Branching rule}
\begin{document}
  \author{Marek Cygan\thanks{Institute of Informatics, University of Warsaw, Poland, \texttt{cygan@mimuw.edu.pl}} \and
	Marcin Pilipczuk\thanks{Institute of Informatics, University of Warsaw, Poland, \texttt{malcin@mimuw.edu.pl}} \and
	Micha\l{} Pilipczuk\thanks{Faculty of Mathematics, Informatics and Mechanics, University of Warsaw, Poland, \texttt{michal.pilipczuk@students.mimuw.edu.pl}} \and
	Jakub Onufry Wojtaszczyk\thanks{Google Inc., Cracow, Poland, \texttt{onufry@google.com}}}
\date{}
\title{On Multiway Cut parameterized above lower bounds}

\maketitle

\begin{abstract}
 In this paper we consider two {\em{above lower bound}} parameterizations
 of the \nmcname problem --- above the maximum separating cut and 
 above a natural LP-relaxation ---
 and prove them to be fixed-parameter tractable.
 Our results imply $O^*(4^k)$ algorithms for \vcammname and \asatname
 as well as an $O^*(2^k)$ algorithm for \nmcname
 with a standard parameterization by the solution size,
 improving previous bounds for these problems.
\end{abstract}

\section{Introduction}

The study of cuts and flows is one of the most active fields in combinatorial optimization.
However, while the simplest case, where we seek a cut separating two given vertices of a graph,
is algorithmically tractable, the problem becomes hard as soon as one starts to deal
with multiple terminals.
For instance, given three vertices in a graph it is NP-hard to decide what is the smallest size
of a cut that separates every pair of them (see \cite{nmc-np-hardness}).
The generalization of this problem --- the well-studied \nmcname problem ---
asks for the size of the smallest set
separating a given set of terminals. The formal definition is as follows:

\defproblemu{\nmcname}{A graph $G=(V,E)$, a set $T \subseteq V$ of {\em{terminals}}
  and an integer $k$.}{Does there exist a set $X \subseteq V \setminus T$ of size at most
    $k$ such that any path between two different terminals intersects $X$?}

For various approaches to this problem we refer the reader for instance to
\cite{garg:multiway-approx,chen-nmc,nmc-np-hardness,marx-separators}.

Before describing our results, let us discuss the methodology we will be working with.
We will be studying \nmcname (and several other problems) from the parameterized
complexity point of view. Note that since the solution to our problem is a set of $k$ vertices
and it is easy to verify whether a solution is correct, we can solve the problem by
enumerating and verifying all the $O(|V|^k)$ sets of size $k$. Therefore, for every fixed
value of $k$, our problem can be solved in polynomial time.
This approach, however, is not feasible even for, say, $k = 10$. The idea of
parameterized complexity is to try to split the (usually exponential) dependency on $k$
from the (hopefully uniformly polynomial) dependency on $|V|$ --- 
so we look for an algorithm where the degree of the polynomial does not
depend on $k$, e.g., an $O(C^k |V|^{O(1)})$ algorithm.

Formally, a parameterized problem $Q$ is a subset of $\Sigma^* \times \mathbb{N}$
for some finite alphabet $\Sigma$, where the integer is the {\em parameter}.
We say that the problem is {\em{fixed parameter tractable}} ({\em{FPT}})
if there exists an algorithm solving any instance $(x,k)$ in
time $f(k) {\rm poly}(|x|)$ for some (usually exponential) computable function $f$.
It is known that a problem is FPT iff
it is kernelizable: a kernelization algorithm for a problem $Q$ takes an instance $(x,k)$ and in
time polynomial in $|x|+k$ produces an equivalent instance $(x', k')$ (i.e., $(x,k)\in Q$ iff $(x',k')\in Q$) such that
$|x'| + k' \leq g(k)$ for some computable function $g$.
The function $g$ is the {\em{size of the kernel}}, and if it is polynomial, we say that $Q$ admits a polynomial kernel.
The reader is invited to refer
to now classical books by Downey and Fellows \cite{downey-fellows:book},
Flum and Grohe \cite{grohe:book} and Niedermeier \cite{niedermeier:book}.

The typical parameterization takes the solution size as the parameter.
For instance, Chen et al. \cite{chen-nmc} have shown an
algorithm solving \nmcname in time $O(4^k n^{O(1)})$, improving upon the previous result of Daniel Marx \cite{marx-separators}.
However, in many cases it turns out we have a natural lower bound on the solution size ---
for instance, in the case of the \vcname problem the cardinality of the maximal matching is
such a lower bound. It can happen that this lower bound is large --- rendering algorithms
parameterized by the solution size impractical. 
For some problems, better answers have been obtained by introducing the so called
{\em parameterization above guaranteed value}, i.e. taking as the parameter the
difference between the expected solution size and the lower bound. The idea was 
first proposed in~\cite{param-above}. An overview of this
currently active research area can be found in the introduction to~\cite{gutin-belkot}.

We will consider two natural lower bounds for \nmcname --- the separating cut and the
LP-relaxation solution.
Let $I=(G,T,k)$ be a \nmcname{} instance and let $s = |T|$.
By a {\em{minimum solution}} to $I$ we mean a set $X \subseteq V \setminus T$ of minimum
cardinality that disconnects the terminals, even if $|X| > k$.

For a terminal $t \in T$ a set $S \subseteq V \setminus T$ is a {\em{separating cut}}
of $t$ if $t$ is disconnected from $T \setminus \{t\}$ in $G[V \setminus S]$
(the subgraph induced by $V \setminus S$).
Let $m(I,t)$ be the size of a minimum isolating cut of $t$.
Notice that for any $t$ the value $m(I, t)$ can be found in polynomial time using
standard max-flow techniques. Moreover, this value is a lower bound for the size of the
minimum solution to $I$ --- any solution $X$ has, in particular, to separate $t$ from
all the other terminals.

Now we consider a different approach to the problem, stemming from linear programming.
Let $\mathcal{P}(I)$ denote the
set of all simple paths connecting two different terminals in $G$.
Garg et al. \cite{garg:multiway-approx} gave a $2$-approximation algorithm using the following
natural LP-relaxation:
\begin{align}\label{eq:lp}
\textrm{minimize}\  & \sum_{v \in V \setminus T} d_v \\
\textrm{subject to}\  & \sum_{v \in P \cap (V \setminus T)} d_v \geq 1 & \forall P \in \mathcal{P}(I)\nonumber\\
  & d_v \geq 0 & \forall v \in V \setminus T\nonumber
\end{align}
In other words, the LP-relaxation asks to assign for each vertex $v \in V \setminus T$
a non-negative weight
$d_v$, such that the distance between pair of terminals, with respect to the weights $d_v$,
is at least one. This is indeed a relaxation of the original problem --- if we
restrict the values $d_v$ to be integers, we obtain the original \nmcname.

The above LP-relaxation has exponential number of constraints, as $\mathcal{P}(I)$ can be exponentially
big in the input size.
However, the optimal solution for this LP-relaxation can be found in
polynomial time either using separation oracle and ellipsoid method
or by solving an equivalent linear program of polynomial size (see \cite{garg:multiway-approx} for details).
By $LP(I)$ we denote the cost of the optimal solution of the LP-relaxation (\ref{eq:lp}).
As the LP-relaxation is less restrictive than the original \nmcname problem, $LP(I)$ is
indeed a lower bound on the size of the minimum solution.

We can now define two {\em{above lower bound}}
parameters: $\pp(I) = k - LP(I)$ and $\lpcon(I) = k - \max_{t \in T} m(I,t)$,
and denote by \nmcalpname{} ({\sc{Node Multiway Cut above LP-relaxation}}) and \nmcacutname{} ({\sc{Node Multiway Cut above Maximum Separating Cut}})
the \nmcname{} problem parameterized by $\pp(I)$ and $\lpcon(I)$, respectively.

We say that a parameterized problem $Q$ is in $XP$,
if there exists an algorithm solving any instance $(x,k)$ in time $|x|^{f(k)}$
for some computable function $f$, i.e., polynomial for any constant value of $k$.
The \nmcacutname{} problem was defined and shown to be in 
XP by Razgon in~\cite{razgon:arxiv2010}.

\paragraph{Our results}

In Section \ref{sec:alg},
using the ideas of Xiao~\cite{xiao:multiway2010}
and building upon analysis of the LP relaxation by Guillemot \cite{guillemot:iwpec2008},
we prove a \nmcname{} instance $I$
can be solved in $O^*(4^{\pp(I)})$ time\footnote{$O^*()$ is the $O()$ notation with suppressed
  factors polynomial in the size of the input.},
which easily yields an $O^*(2^{\lpcon(I)})$-time
algorithm. Both algorithms run in polynomial space.
Consequently we prove both \nmcalpname{} and \nmcacutname{} problems to be FPT,
solving an open problem of Razgon~\cite{razgon:arxiv2010}.
Observe that if $\lpcon(I) > k$ the answer is trivially negative,
hence as a by-product we obtain
an $O^*(2^k)$ time algorithm for the \nmcname{} problem, improving the previously best known
$O^*(4^k)$ time algorithm by Chen et al.~\cite{chen-nmc}.

By considering a line graph of the input graph, it is easy to see that an
edge-deletion variant of {\sc{Multiway Cut}} is easier than the node-deletion one,
and our results hold also for the edge-deletion variant.
We note that the edge-deletion variant, parameterized above maximum separating cut,
was implicitly proven to be FPT by Xiao \cite{xiao:multiway2010}.

Furthermore we observe that \vcammname{} is a special case of \nmcalpname{},
while it is known that \vcammname{} is equivalent to \asatname{} from
the point of view of parameterized complexity \cite{gutin:vc-a-mm,razgon:icalp2008}.
The question of an FPT algorithm for those two problems was a long-standing open
problem until Razgon and O'Sullivan gave an $O^*(15^k)$-time algorithm in 2008.
Our results improve this bound to $O^*(4^k)$ for both \vcammname{} and \asatname{}.
The details are gathered in Section \ref{sec:2sat}.

One of the major open problems in kernelization is the question of a polynomial
kernel for \nmcname{}, parameterized by the solution size.
Our results show that the number of terminals can be reduced to $2k$ in polynomial-time, improving a quadratic
bound due to Razgon \cite{razgon:arxiv2011}. Moreover, our algorithm includes a number of
polynomial-time reduction rules, that may be of some interest from the point of view of kernelization.

Finally, we consider the \nmulticutname{} problem, a generalization of \nmcname{},
which was recently proven to be FPT when parameterized by the solution size \cite{marx:multicut,thomasse:multicut}.
In Section \ref{sec:multicut} we show that \nmulticutname{}, when parameterized above a natural LP-relaxation,
   is significantly more difficult and even not in $XP$.

\paragraph{Notation}

Let us introduce some notation. All considered graphs are undirected and simple.
Let $G=(V,E)$ be a graph. For $v \in V$ by $N(v)$ we denote the set of neighbours of
$v$, $N(v) = \{u \in V: uv \in E\}$, and by $N[v]$ the closed neighbourhood of $v$,
$N[v] = N(v) \cup \{v\}$. We extend this notation to subsets of vertices $S \subseteq V$,
$N[S] = \bigcup_{v \in S} N[v]$, $N(S) = N[S] \setminus S$.
By {\em{removing}} a vertex $v$ we mean transforming $G$ to
$(V \setminus v, E \setminus \{uv, vu : u \in V\})$.
The resulting graph is denoted by $G \setminus v$.
By {\em{contracting an edge}} $uv$ we mean the following operation: we remove vertices
$u$ and $v$, introduce a new vertex $x_{uv}$ and connect it to all vertices previously
connected to $u$ or $v$. The resulting graph is denoted by $G \slash uv$.
If $u \in T$ and $v \notin T$, we somewhat abuse the notation
and identify the new vertex $x_{uv}$ with $u$, so that the terminal set
remains unchanged. In this paper we do not contract any edge that connects two terminals.

\section{Algorithms for \mcname{}}\label{sec:alg}

Let $I=(G,T,k)$, where $G = (V,E)$, be a \nmcname{} instance.
First, let us recall the two known facts about the LP-relaxation (\ref{eq:lp}).
\begin{definition}[\cite{garg:multiway-approx,guillemot:iwpec2008}]
Let $(d_v)_{v \in V \setminus T}$ be a feasible solution to the LP-relaxation (\ref{eq:lp}) of $I$.
For a terminal $t$, the {\em{zero area}} of $t$, denoted by $U_t$, is the set of vertices within distance zero from $t$ with respect to weights $d_v$.
\end{definition}
\begin{lemma}[\cite{garg:multiway-approx}]\label{lem:garg}
Given an optimal solution $(d_v^*)_{v \in V \setminus T}$
to the LP-relaxation (\ref{eq:lp}), let us construct an assignment $(d_v)_{v \in V \setminus T}$
as follows. First, for each terminal $t$ compute its zero area $U_t$ with respect to weights $(d_v^*)_{v \in V \setminus T}$.
Second, for $v \in V \setminus T$
we take $d_v = 1$ if $v \in N(U_t)$
for at least two terminals $t$, $d_v = 1/2$ if $v \in N(U_t)$
for exactly one terminal $t$, and $d_v = 0$ otherwise.
Then $(d_v)_{v \in V \setminus T}$ is also an optimal solution to the LP-relaxation (\ref{eq:lp}).
\end{lemma}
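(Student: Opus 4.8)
The plan is to verify two things about the modified assignment $(d_v)_{v \in V \setminus T}$: first, that it is feasible for the LP-relaxation (\ref{eq:lp}), and second, that its cost does not exceed $LP(I)$ (equality then follows since $LP(I)$ is the optimum and $(d_v)$ is feasible). For feasibility I would take an arbitrary simple path $P$ connecting two distinct terminals $t_1$ and $t_2$, and track how $P$ leaves $U_{t_1}$. Walking along $P$ from $t_1$, since $t_1 \in U_{t_1}$ and $t_2 \notin U_{t_1}$ (the optimal solution $(d_v^*)$ has distance at least one between terminals, so in particular $t_2$ is not at distance zero from $t_1$), there is a first vertex $v$ on $P$ that does not lie in $U_{t_1}$; that vertex is a neighbour of the last visited vertex, which is in $U_{t_1}$, so $v \in N(U_{t_1})$ and hence $d_v \geq 1/2$. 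Symmetrically, walking from $t_2$, there is a first vertex $w \notin U_{t_2}$ with $w \in N(U_{t_2})$ and $d_w \geq 1/2$. If $v \neq w$ then $\sum_{u \in P \cap (V \setminus T)} d_u \geq d_v + d_w \geq 1$ and we are done; if $v = w$ then $v \in N(U_{t_1}) \cap N(U_{t_2})$ with $t_1 \neq t_2$, so by construction $d_v = 1$ and again the constraint is satisfied. (One should note $v, w \notin T$: a terminal $t' \neq t_1$ cannot lie in $U_{t_1}$ since terminals are pairwise at distance at least one, but it also cannot appear on the path as an internal vertex because $P$ connects $t_1, t_2$ and paths in $\mathcal{P}(I)$ are simple with terminal endpoints — actually the cleaner statement is simply that the LP only ranges over $v \in V \setminus T$, so we restrict attention to those vertices; the first vertex outside $U_{t_1}$ that is counted is a non-terminal, because entering another terminal would already have ended the relevant sub-path.)

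For the cost bound, the key identity is that the contribution of each terminal $t$ to $\sum_v d_v$ can be charged against the contribution of the boundary $N(U_t)$ in the original optimal solution $(d_v^*)$. Concretely, I would argue that $\sum_{v \in N(U_t) \setminus T} d_v^* \geq \tfrac{1}{2} \cdot (\text{number of "private" neighbours, counted appropriately})$ — more precisely, the standard ball-growing / region-growing argument of Garg et al.\ shows that for the zero area $U_t$ one has $\sum_{v \in N(U_t)\setminus T} d_v^* \geq 1$ is too strong in general, so instead one uses that the total LP cost decomposes. The right way: observe that $\sum_{v \in V\setminus T} d_v = \sum_t \tfrac12 \cdot |\{v : v \in N(U_t)\}|$ counted with the convention that a vertex in several $N(U_t)$'s gets $1/2$ per terminal but is capped — here the "$d_v = 1$" case exactly accounts for $1/2 + 1/2$ from two terminals, and additional terminals beyond the second contribute nothing, which only helps. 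So $\sum_v d_v \leq \sum_t \tfrac12 |N(U_t) \setminus T|$. Thus it suffices to show $\sum_{v \in N(U_t)\setminus T} d_v^* \geq \tfrac12 |N(U_t)\setminus T|$ is false in general too; the honest route is to invoke the complementary-slackness / structure of optimal LP solutions directly, namely that an optimal solution can be taken so that every vertex with $d_v^* > 0$ lies on the boundary of some zero area, and then a direct summation over terminals gives the bound. I will use the cited fact from \cite{guillemot:iwpec2008} that $LP(I) = \sum_{t} \tfrac12 (\text{something})$, or re-derive it.

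The main obstacle, and the step I expect to require the most care, is the cost inequality $\sum_v d_v \le LP(I)$: feasibility is a short path-tracking argument, but bounding the new cost genuinely needs the region-growing structure behind the Garg et al.\ $2$-approximation, specifically the fact that at the LP optimum the zero areas $U_t$ and their neighbourhoods carry the entire weight in a balanced way. I would handle this by first citing the precise structural statement from \cite{garg:multiway-approx} (that an optimal half-integral-on-the-boundary solution exists and its cost equals $LP(I)$), and only if that is not quotable in the exact form needed, fall back to a from-scratch region-growing argument: pick the optimal $(d_v^*)$, define $U_t$ as the distance-zero ball, note the boundaries $N(U_t)\setminus T$ are where all weight concentrates after the rounding, and sum. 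The feasibility direction above already shows the rounded solution is feasible, so the whole lemma reduces to this one counting inequality, after which optimality of $(d_v)$ is immediate.
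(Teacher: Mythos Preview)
The paper does not prove this lemma at all; it is quoted verbatim from \cite{garg:multiway-approx} and used as a black box. So there is no ``paper's own proof'' to compare against, and your final fallback --- cite the structural half-integrality result of Garg et al.\ --- is exactly what the paper does.

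That said, as a self-contained argument your proposal has a genuine gap. The feasibility half is fine: the first non-$U_{t_1}$ vertex $v$ along $P$ lies in $N(U_{t_1})$, the symmetric vertex $w$ lies in $N(U_{t_2})$, and if $v=w$ then $d_v=1$. (The worry about terminals is a non-issue: no terminal $t'$ can lie in $N(U_t)$, since a neighbour in $U_t$ would give $\mathrm{dist}(t,t')=0<1$.) The cost half, however, you never actually establish. You correctly reduce to $\tfrac12\sum_t |N(U_t)|\le LP(I)$, then try two per-terminal and per-vertex inequalities, correctly note each is false, and end by proposing to cite \cite{garg:multiway-approx} --- which is circular, since that \emph{is} the lemma.

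What is missing is a global charging via LP duality. With an optimal dual multicommodity flow $(y_P^\ast)$ one has $\sum_{P\ni v} y_P^\ast=1$ for every $v\in N(U_t)$ by complementary slackness (such $v$ satisfy $d_v^\ast>0$). The point you do not reach is that any tight path $P$ (i.e.\ $\sum_{u\in P} d_u^\ast=1$) passing through $v\in N(U_t)$ must have $t$ as an endpoint whenever $d_v^\ast<1$: routing from $t$ through $U_t$ to $v$ and then along each half of $P$ gives two terminal-to-terminal walks whose $d^\ast$-lengths sum to $1+d_v^\ast<2$, so they cannot both reach terminals different from $t$. Vertices with $d_v^\ast\ge 1$ are handled separately (they lie in at least two $N(U_t)$'s by the same distance argument, so $d_v=1\le d_v^\ast$). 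Combining these pieces yields $\sum_t|N(U_t)|\le \sum_t F_t = 2\,LP(I)$, where $F_t$ is the dual flow out of $t$. Your sketch names complementary slackness but stops short of this path-endpoint argument, which is the actual content.
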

\begin{lemma}[\cite{guillemot:iwpec2008}, Lemma 3]\label{lem:guillemot}
Let $(d_v^*)_{v \in V \setminus T}$ be any optimal solution to the LP-relaxation (\ref{eq:lp}) of $I$.
Then there is a minimum solution to $I$ that is disjoint with $\bigcup_{t \in T} U_t$.
\end{lemma}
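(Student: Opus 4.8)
The plan is to prove Lemma~\ref{lem:guillemot} by an exchange argument: start with an arbitrary minimum solution $X$ and show that if $X$ intersects $\bigcup_{t\in T}U_t$, we can modify it into another solution of the same size that avoids this set a bit more, then iterate. First I would fix an optimal LP solution $(d_v^*)$ and the corresponding zero areas $U_t$. A useful preliminary observation is that the $U_t$ are pairwise disjoint and moreover no edge joins $U_t$ to $U_{t'}$ for $t\neq t'$: any such edge, or a vertex in $U_t\cap U_{t'}$, would give a path of $d^*$-length $0$ between two terminals, violating feasibility. So each $U_t$ is a ``blob'' around $t$ separated from the other blobs by vertices of positive weight, and the neighbourhood $N(U_t)\setminus T$ consists of vertices carrying strictly positive $d^*$-mass.

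Next I would set up the exchange. Let $X$ be a minimum solution with $|X\cap U_t|>0$ for some $t$; write $Z_t := X\cap U_t$. The key idea (this is the Xiao/Guillemot-style "pushing" step) is that instead of cutting inside the blob $U_t$, it suffices to cut on its boundary. Concretely, consider replacing the part of $X$ lying in $U_t$ by some subset of $N(U_t)\setminus T$. After removing $X$, the terminal $t$ is separated from the other terminals; let $C_t$ be the connected component of $t$ in $G\setminus X$, and note $C_t\subseteq U_t\cup(\text{stuff})$ is not quite right — instead I would look at $R_t$, the set of vertices reachable from $t$ in $G[(V\setminus T)\setminus (X\setminus U_t)]$, i.e. reachable using blob vertices freely. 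The boundary $\partial R_t := N(R_t)\setminus T$ then is a separating cut for $t$ that is disjoint from $U_t$ (every vertex of $U_t$ adjacent to a terminal is adjacent only to $t$, and such vertices, if needed, still lie outside... ) — the precise claim to establish is that $(X\setminus U_t)\cup \partial R_t$ is again a multiway cut. This holds because any $t$-to-$t'$ path either is already hit by $X\setminus U_t$, or stays in $R_t$ until it leaves through $\partial R_t$.

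The crucial quantitative point, and the place I expect the real work, is the size bound: I must show $|\partial R_t|\le |X\cap U_t|$ (at least on average, or after choosing $t$ cleverly), so that the swap does not increase $|X|$. Here is where the LP optimality enters. Let $Y_t := X\cap U_t$. Observe $Y_t$ is an $(t,\partial R_t)$-separator in the subgraph induced on $U_t\cup \partial R_t$ (removing $Y_t$ disconnects $t$ from $R_t$'s exit), so by Menger's theorem there are $|Y_t|$ vertex-disjoint paths from $t$ to $\partial R_t$ inside $U_t\cup\partial R_t$; since $U_t$ carries zero $d^*$-weight and $\partial R_t$-vertices carry positive weight, each such path "charges" one distinct boundary vertex, giving $|\partial R_t|\le |Y_t|$ — wait, that inequality goes the wrong way, so the honest approach is the reverse: by Menger $|Y_t|\ge$ (min vertex cut) $=$ (max number of disjoint $t$–$\partial R_t$ paths) $\ge |\partial R_t|$ only if $\partial R_t$ is "tight", which is exactly what one arranges by taking $R_t$ maximal. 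In other words, choose $X$ among all minimum solutions so that $\sum_t|R_t\cap U_t|$ is lexicographically maximum (or $\sum_t |X\cap U_t|$ minimum); then $\partial R_t$ is a minimum $t$–$(T\setminus t)$ vertex cut within the relevant subgraph and equals $|Y_t|$ by the maximality choice, yielding $|(X\setminus U_t)\cup\partial R_t|\le|X|$, hence equality and a new minimum solution with strictly smaller intersection with $U_t$ — contradicting the extremal choice unless $X\cap U_t=\emptyset$ for all $t$. I would present it cleanly via this extremal/exchange formulation, and the main obstacle is getting the submodularity/uncrossing bookkeeping exactly right so that pushing $X$ out of $U_t$ for one terminal does not increase the intersection with $U_{t'}$ for another; the disjointness of the blobs and the fact that $\partial R_t\subseteq N(U_t)\setminus(T\cup\bigcup_{t'}U_{t'})$ is what makes this go through.
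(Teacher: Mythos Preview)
The paper does not give its own proof of Lemma~\ref{lem:guillemot}; it is quoted verbatim from Guillemot~\cite{guillemot:iwpec2008}. So there is no in-paper argument to compare against, and I will just comment on whether your sketch stands on its own.

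There is a genuine gap. With your definition of $R_t$ as the set reachable from $t$ after deleting only $X\setminus U_t$ (and the other terminals), every non-terminal neighbour of $R_t$ already lies in $X\setminus U_t$; hence $\partial R_t\subseteq X\setminus U_t$ and your ``new'' solution $(X\setminus U_t)\cup\partial R_t$ collapses to $X\setminus U_t$. If that set were a multiway cut it would be strictly smaller than $X$, contradicting minimality outright, with no need for the extremal choice. But it is \emph{not} a multiway cut in general: a $t$--$t'$ path can run through $R_t$ and exit directly into the terminal $t'$ (which lies in $N(R_t)\cap T$, not in $\partial R_t$), and likewise a $t'$--$t''$ path with $t',t''\neq t$ can pass through $U_t$ and meet $X$ only inside $U_t$. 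Your justification ``stays in $R_t$ until it leaves through $\partial R_t$'' overlooks exactly these exits through $T$. The subsequent Menger/extremal discussion does not repair this, and, more tellingly, never makes essential use of the optimality of $(d_v^*)$; since the statement is false for non-optimal feasible LP solutions, any correct proof must invoke optimality at a decisive point, and yours does not.

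The right shape of the exchange is the reverse of what you wrote: take $R_t$ to be the component of $t$ in $G\setminus X$ (so $N(R_t)\setminus T\subseteq X$), set $R_t'=R_t\cup U_t$, and replace $X\cap R_t'$ by $S_t':=N(R_t')\setminus T$. Now $S_t'$ can contain vertices outside $X$, the resulting set is easily seen to be a multiway cut disjoint from $U_t$, and the content of the proof is the inequality $|S_t'|\le |N(R_t)\setminus T|$. That inequality is precisely where LP optimality is spent: enlarging the $t$-side of any separating set to include all of $U_t$ cannot increase the boundary, because a strict increase would let you shift $\varepsilon$ of $d^*$-mass off $N(U_t)$ and onto the (zero-weight) vertices of $N(R_t)\cap U_t$, producing a cheaper feasible LP solution. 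Making that perturbation (or the equivalent complementary-slackness/flow argument) precise is the missing substantive step.
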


Our algorithm consists of two parts. The first part is a set of several polynomial-time
reduction rules. At any moment, we apply the lowest-numbered applicable rule.
We shall prove that the original instance $I$ is a YES--instance if and only
if the reduced instance is a YES--instance, we will say this means the
reduction is {\em sound}.
We prove that no reduction rule increases the parameter $\pp(I)$ or the graph size.
If no reduction rule can be applied, we proceed to the branching rule.
The branching rule outputs two subcases, each with the parameter $\pp(I)$ decreased
by at least $1/2$ and a smaller graph. If the answer to any of the two subcases is YES,
we return YES from the original instance, otherwise we return NO.
As the parameter $\pp(I)$ decreases by at least $1\slash 2$ with each branching,
and we can trivially return NO if $\pp(I)$ is negative, 
we obtain the claimed $O^*(4^{\pp(I)})$ running time.

\begin{reduction}\label{red:no}
  If two terminals are connected by an edge or $\pp(I) < 0$, return NO.
\end{reduction}
The first part of the above rule is obviously sound, as we only remove vertices, not edges.
The second part is sound as the optimal cost of the LP-relaxation (\ref{eq:lp})
is a lower bound for the size of the minimum solution to the instance $I$.

\begin{reduction}\label{red:common}
  If there exists a vertex $w \in V \setminus T$ that is adjacent to two
  terminals $t_1, t_2 \in T$, remove $w$ from $G$ and decrease $k$ by one.
\end{reduction}
The above rule is sound, as such a vertex $w$ has to be included in any solution
to $I$. Let us now analyse how the parameter $\pp(I)$ is influenced by this rule.
Let $I' = (G \setminus w,T,k-1)$ be the output instance.
Notice that any feasible solution $(d_v)_{v \in V \setminus (T \cup \{w\})}$ to $I'$ can be
extended to a feasible solution of $I$ by putting $d_w = 1$. Thus $LP(I) \leq LP(I') + 1$, and
we infer $\pp(I) \geq \pp(I')$.

\begin{reduction}\label{red:eat}
  Let $w \in V \setminus T$ be a neighbour of a terminal $t \in T$.
  Let $(d_v^\circ)_{v \in V \setminus T}$ be a solution to the LP-relaxation
  (\ref{eq:lp}) with an additional constraint $d_w=0$. If the cost
  of the solution $(d_v^\circ)_{v \in V \setminus T}$ is equal to $LP(I)$,
  contract the edge $tw$.
\end{reduction}
As $(d_v^\circ)_{v \in V \setminus T}$ is a feasible solution to
the LP-relaxation (\ref{eq:lp}), its cost is at least $LP(I)$.
If the rule is applicable, $(d_v^\circ)_{v \in V \setminus T}$
is an optimal solution to the LP-relaxation (\ref{eq:lp})
and $w \in U_t$. The soundness of Reduction \ref{red:eat} follows from Lemma \ref{lem:guillemot}.
Moreover, note that if $I'$ is the output instance of Reduction \ref{red:eat},
we have $LP(I) = LP(I')$, as $(d_v^\circ)_{v \in V \setminus (T \cup \{w\})}$
is a feasible solution to the LP-relaxation (\ref{eq:lp}) for the instance $I'$.
We infer that $\pp(I) = \pp(I')$.

The following lemma summarizes properties of an instance, assuming none of the above
reduction rules is applicable.
\begin{lemma}\label{lem:norule}
If Reductions \ref{red:no}, \ref{red:common} and \ref{red:eat} are not applicable, then:
\begin{enumerate}
\item An assignment $(d_v)_{v \in V \setminus T}$ that assigns $d_v = 1/2$ if $v \in N(T)$
and $d_v = 0$ otherwise is an optimal solution to the LP-relaxation (\ref{eq:lp}).
\item For each terminal $t \in T$, the set $N(t)$ is the unique minimum separating cut of $t$.
\end{enumerate}
\end{lemma}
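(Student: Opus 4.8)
The plan is to prove the two items separately, distilling the inapplicability of Reductions~\ref{red:no}--\ref{red:eat} into two facts that do all the work: (a) by Reduction~\ref{red:common} no vertex of $V\setminus T$ is adjacent to two terminals, so the sets $N(t)$ $(t\in T)$ are pairwise disjoint subsets of $V\setminus T$, and by Reduction~\ref{red:no} no two terminals are adjacent; and (b) since Reduction~\ref{red:eat} does not apply, no optimal solution of (\ref{eq:lp}) assigns $0$ to a neighbour of a terminal, so for every optimal $(d_v^*)_{v\in V\setminus T}$ and every $v\in N(T)$ we have $d_v^*>0$, whence (the first edge of any path leaving a terminal reaching a non-terminal) the zero area $U_t$ of $(d_v^*)$ equals $\{t\}$ for every $t$.

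For item~1, feasibility is immediate: the two end-neighbours of a simple terminal-to-terminal path lie in $N(T)$ by fact~(a), and are distinct by fact~(a), so the path carries weight at least $\tfrac12+\tfrac12=1$. For optimality I would take an arbitrary optimal $(d_v^*)$, use fact~(b) to get $U_t=\{t\}$ for all $t$, and feed $(d_v^*)$ into Lemma~\ref{lem:garg}: since $N(U_t)=N(t)$ and these sets are pairwise disjoint, the rule of Lemma~\ref{lem:garg} gives weight $1$ to nobody and weight $\tfrac12$ exactly to $N(T)$; as Lemma~\ref{lem:garg} certifies this assignment is optimal, item~1 follows.

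For item~2, $N(t)$ is a separating cut of $t$ (it is disjoint from $T$ by fact~(a), and deleting it isolates $t$), so it suffices to show $m(I,t)=|N(t)|$ and that $N(t)$ is the only separator of this size. For the value I would pass to the dual of (\ref{eq:lp}), a fractional flow $\max\sum_{P\in\mathcal{P}(I)}y_P$ subject to $\sum_{P\ni v}y_P\le 1$ for all $v\in V\setminus T$. Fix an optimal $y^*$ and apply complementary slackness against the item-1 optimum: every $v\in N(T)$ is saturated, and every $P$ with $y_P^*>0$ meets $N(T)$ in exactly two vertices, which are therefore its two end-neighbours. Hence any $P$ in the support of $y^*$ passing through a vertex $w\in N(t)$ has $t$ as an endpoint (the unique terminal adjacent to $w$) and meets $N(t)$ only in $w$; summing the saturation identities over $w\in N(t)$ shows that the restriction of $y^*$ to paths incident to $t$ is a fractional flow from $t$ to $T\setminus\{t\}$ of value $|N(t)|$ respecting unit vertex capacities. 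By max-flow/min-cut (the optimal cut being integral and, as a subset of $V\setminus T$ separating $t$ from $T\setminus\{t\}$, a separating cut of $t$) this gives $m(I,t)\ge|N(t)|$, so $m(I,t)=|N(t)|$.

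Uniqueness I would attack by contradiction. Let $S\ne N(t)$ be another minimum separating cut of $t$, let $A$ be the component of $t$ in $G\setminus S$; then $S=N(A)$, $A\cap T=\{t\}$, and, since a vertex of $A$ adjacent to a terminal other than $t$ would place that terminal in $S\subseteq V\setminus T$, we have $N(T)\cap A=N(t)\cap A$, which is nonempty because $A\supsetneq\{t\}$; pick $w\in N(t)\cap A$. The goal is to produce an optimal solution of (\ref{eq:lp}) with $d_w=0$, contradicting fact~(b). The natural candidate reroutes the weight of $N(t)\cap A$ onto $S$: set $d_v=1$ on $S\setminus N(t)$, $d_v=\tfrac12$ on $(N(t)\cap S)\cup\bigl(N(T)\setminus(A\cup S)\bigr)$, and $d_v=0$ on $(A\setminus\{t\})$ and everywhere else. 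Feasibility can be checked by cases: a terminal path avoiding $A$ dominates the item-1 optimum on its vertices; a path with neither endpoint equal to $t$ but meeting $A$ crosses $S$ at least twice; and a path starting at $t$ crosses $S$ at some vertex $c$, and if $c$ differs from the path's penultimate vertex $\sigma$ — which lies outside $A$ by the displayed identity and fact~(a) — then $c$ and $\sigma$ together carry two halves, whereas if $c=\sigma$ then $\sigma\in S\setminus N(t)$ alone carries weight $1$. A direct count gives the cost of this assignment as $LP(I)+\tfrac12\,|S\setminus N(T)|$, which equals $LP(I)$ exactly when $S\subseteq N(T)$. Thus the crux, and the step I expect to be the main obstacle, is to show that a minimum separating cut $S\ne N(t)$ must be ``efficient'', i.e. $S\subseteq N(T)$ (and, similarly, $S\cap N(t)=\emptyset$); this again should follow from the inapplicability of Reduction~\ref{red:eat}, but possibly invoked at terminals other than $t$, and making that precise is the delicate point. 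Granting it, the rerouted assignment is an optimal solution of (\ref{eq:lp}) with $d_w=0$, contradicting fact~(b) and finishing the proof.
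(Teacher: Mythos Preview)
Your argument for item~1 is correct and coincides with the paper's: inapplicability of Reduction~\ref{red:eat} forces $U_t=\{t\}$ in any optimum, and then Lemma~\ref{lem:garg} together with the disjointness from Reduction~\ref{red:common} hands you the claimed half-integral optimum.

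For item~2, the duality detour establishing $m(I,t)=|N(t)|$ is correct but superfluous, and the uniqueness argument has a real gap precisely where you flag it: you need $S\subseteq N(T)$, but you have no independent route to this --- indeed, once uniqueness is proved it becomes a \emph{consequence} (the only minimum cut is $N(t)\subseteq N(T)$), so proving it separately is as hard as the lemma itself. The paper sidesteps the issue with a simpler rerouting that works for \emph{any} competing cut $C\ne N(t)$ with $|C|\le|N(t)|$: starting from the item-1 optimum $(d_v)$, set $d_v'=d_v+\tfrac12$ for $v\in C\setminus N(t)$, $d_v'=d_v-\tfrac12$ for $v\in N(t)\setminus C$, and $d_v'=d_v$ otherwise. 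The cost changes by $\tfrac12(|C|-|N(t)|)\le 0$, so $(d_v')$ is optimal; feasibility is a two-line case split (a path avoiding $t$ keeps both end-neighbours at weight $\ge\tfrac12$; a path from $t$ hits $C$ at some vertex of weight $\ge\tfrac12$ while its far end-neighbour in $N(T\setminus\{t\})$ also keeps weight $\ge\tfrac12$, and if these coincide that vertex has weight $1$). Since $N(t)\setminus C\ne\emptyset$, some $w\in N(t)$ now has $d_w'=0$, contradicting the inapplicability of Reduction~\ref{red:eat}. Comparing with your construction: on $S\setminus N(T)$ you assign weight~$1$ where the paper assigns only $\tfrac12$, and this overshoot is exactly the $\tfrac12|S\setminus N(T)|$ surplus you compute --- so your ``crux'' dissolves once you notice that $\tfrac12$ already suffices there, and both the flow argument and the hoped-for inclusion $S\subseteq N(T)$ become unnecessary.
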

\begin{proof}
Let $(d_v^*)_{v \in V \setminus T}$ be any optimal solution to the LP-relaxation
(\ref{eq:lp}). As Reduction \ref{red:eat} is not applicable, $d_w^* > 0$ for any $w \in N(T)$.
As Reduction \ref{red:common} is not applicable, if we invoke Lemma \ref{lem:garg}
on the assignment $(d_v^*)_{v \in V \setminus T}$, we obtain the assignment
$(d_v)_{v \in V \setminus T}$. Thus the first part of the lemma is proven.

For the second part, obviously $N(t)$ is a separating cut of $t$.
Let $C \subseteq V \setminus T$ be any other separating cut of $t$ and assume $|C| \leq |N(t)|$.
Let $d_v' = d_v + 1/2$ if $v \in C \setminus N(t)$, $d_v' = d_v - 1/2$ if $v \in N(t) \setminus C$
and $d_v' = d_v$ otherwise. It is easy to see that $d_v'$ is a feasible solution to
the LP-relaxation (\ref{eq:lp}). As $|C| \leq |N(t)|$,
    $\sum_{v \in V \setminus T} d_v' \leq \sum_{v \in V \setminus T} d_v$
    and we infer that $(d_v')_{v \in V \setminus T}$ is an optimal solution
to the LP-relaxation (\ref{eq:lp}). However, $d_v' = 0$ for $v \in N(t) \setminus C$,
and Reduction \ref{red:eat} would be applicable.
\end{proof}

\begin{branch}
  Let $w \in V \setminus T$ be a neighbour of a terminal $t \in T$.
  Branch into two subcases, either $w$ is included in a solution to the \nmcname{}
  instance $I$ or not. In the first branch, we remove $w$ from the graph and decrease $k$
  by one. In the second one, we contract the edge $tw$.
\end{branch}

The soundness of the branching rule is straightforward. We now prove that in both subcases
the parameter $\pp(I)$ drops by at least $1/2$. Let $I_1 = (G \setminus w, T, k-1)$
and $I_2 = (G \slash tw, T, k)$ be the output instances in the first and second cases,
respectively.

In the first subcase, it is sufficient to prove that $LP(I_1) \geq LP(I) - 1/2$, i.e., that
the cost of the optimal solution to the LP-relaxation (\ref{eq:lp}) drops by at most half.
Assume the contrary, that $LP(I_1) < LP(I) - 1/2$.
Let $(d_v)_{v \in V \setminus T \setminus \{w\}}$ be a half-integral optimal
solution to the LP-relaxation (\ref{eq:lp}) for $I_1$, as asserted
by Lemma \ref{lem:garg}. Note that if we put $d_w = 1$, then $(d_v)_{v \in V \setminus T}$
is a feasible solution to the LP-relaxation (\ref{eq:lp}) for $I$, and $LP(I_1) \geq LP(I)-1$.
By half-integrality, $LP(I_1) = LP(I) - 1$ and $(d_v)_{v \in V \setminus T}$ is an optimal
half-integral solution to the LP-relaxation (\ref{eq:lp}) for $I$.
As Reduction \ref{red:eat} is not applicable, $d_v > 0$ for all $v \in N(T)$.
As $(d_v)_{v \in V \setminus T}$ is half-integral, $d_v \geq 1/2$ for all $v \in N(T)$.
However, the assignment given by Lemma \ref{lem:norule} has strictly smaller cost
than $(d_v)_{v \in V \setminus T}$ (as $d_w = 1$),
a contradiction to the fact that $(d_v)_{v \in V \setminus T}$ is an optimal
solution to the LP-relaxation (\ref{eq:lp}) for $I$.
Thus $LP(I_1) \geq LP(I) -1/2$ and $\pp(I_1) \leq \pp(I) - 1/2$.

In the second subcase note that, as Reduction \ref{red:eat} is not applicable,
$LP(I_2) > LP(I)$. As the LP-relaxation (\ref{eq:lp}) has half-integral solutions,
we have $LP(I_2) \geq LP(I) + 1/2$. This implies that $\pp(I_2) \leq \pp(I)-1/2$.

Since Reduction \ref{red:no} stops when the parameter $\pp(I)$ becomes negative,
we obtain the following theorem.
\begin{theorem}\label{thm:nmc-lp}
  There exists an algorithm that solves a \nmcname{} instance $I$
  in $O^*(4^{\pp(I)})$ time.
\end{theorem}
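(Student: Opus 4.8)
The plan is to assemble the pieces already developed in this section into a complete algorithm and argue its correctness and running time by a potential-function argument on $\pp(I)$. First I would describe the algorithm: repeatedly apply the lowest-numbered applicable reduction among Reductions~\ref{red:no}, \ref{red:common}, \ref{red:eat}; once none applies, pick any neighbour $w$ of a terminal (if no such neighbour exists, the terminals are already isolated, so we answer YES iff $k \geq 0$) and invoke the branching rule, recursing on the two instances $I_1 = (G \setminus w, T, k-1)$ and $I_2 = (G/tw, T, k)$, returning YES iff at least one recursive call does.

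Next I would argue correctness. Each reduction is sound (established individually in the text: Reduction~\ref{red:no} by monotonicity and the LP lower bound, Reduction~\ref{red:common} because such a $w$ lies in every solution, Reduction~\ref{red:eat} via Lemma~\ref{lem:guillemot}), and the branching rule is sound because any solution either contains $w$ or does not, matching the two branches exactly. Hence by induction on the recursion the algorithm returns YES iff $I$ is a YES-instance. The base cases are handled by Reduction~\ref{red:no} (immediate NO) and by the terminal-free case (immediate YES when $k \geq 0$); since $\pp(I) = k - LP(I) \le k$ and $LP(I) \ge 0$, reaching $\pp(I) < 0$ before termination is caught by Reduction~\ref{red:no}.

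For the running time I would set up the recursion on the parameter $\pp(I)$. The earlier analysis shows: no reduction rule increases $\pp(I)$ (Reduction~\ref{red:common} gives $\pp(I') \le \pp(I)$, Reduction~\ref{red:eat} gives $\pp(I') = \pp(I)$, Reduction~\ref{red:no} terminates), and each branch strictly decreases $\pp(I)$ by at least $1/2$, i.e. $\pp(I_1), \pp(I_2) \le \pp(I) - 1/2$. Moreover every rule and every branch either terminates, or strictly decreases the number of vertices, so the recursion tree is finite and each node does polynomial work. Writing $T(p)$ for the number of leaves of the recursion tree on an instance with parameter value $p$, we get $T(p) \le 2 T(p - 1/2)$ with $T(p) = O(1)$ for $p < 0$, which solves to $T(p) = O(4^p)$. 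Multiplying by the polynomial work per node gives the claimed $O^*(4^{\pp(I)})$ bound.

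The main subtlety to get right — and the one place I would be careful — is the accounting of how $\pp(I)$ behaves, since a reduction could in principle be applied \emph{after} a branch deep in the recursion; the invariant needed is that $\pp$ is non-increasing under reductions and drops by $\ge 1/2$ per branch, which is exactly what was proved above, so the product of these facts along any root-to-leaf path in the recursion tree forces the depth (measured in units of $1/2$ in $\pp$) to be at most $2\pp(I)$. A secondary point is ensuring the recursion actually terminates even when $\pp$ does not decrease (the reduction steps), which follows because each reduction strictly shrinks the graph; combined with the branching decreasing the graph as well, there is no infinite descent. No new ideas are required beyond carefully combining the established soundness and parameter-monotonicity claims.
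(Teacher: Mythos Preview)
Your proposal is correct and follows exactly the approach laid out in the paper: exhaustively apply Reductions~\ref{red:no}--\ref{red:eat}, then branch, using the established facts that reductions never increase $\pp(I)$ and each branch decreases it by at least $1/2$, with Reduction~\ref{red:no} providing the cutoff at $\pp(I)<0$. Your write-up simply makes explicit the assembly of these pieces (including the trivial base case when $N(T)=\emptyset$ and the termination argument via graph shrinkage), which the paper leaves implicit in the discussion preceding the theorem.
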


To solve \nmcname{} parameterized by $\lpcon(I)$, we introduce one more reduction rule.
Recall $s$ denotes the number of terminals.
\begin{reduction}\label{red:yes}
  If $\lpcon(I) \geq \frac{s-2}{s-1} \cdot k$ or $\lpcon(I) \leq 2\pp(I)$, return YES.
\end{reduction}
Now we show that Reduction~\ref{red:yes} is sound.
Let $t_0 \in T$ be the terminal with the largest separating cut, i.e.,
$m(I,t_0) = \max_{t \in T} m(I,t)$.
Let $X = N(T \setminus \{t_0\})$.
Obviously no two terminals are in the same connected component of $G[V \setminus X]$.
We claim that $|X| \leq k$.

If $\lpcon(I) \geq \frac{s-2}{s-1} \cdot k$, $|N(t)| = m(I,t)$ by Lemma \ref{lem:norule}, and:
$$|X| = \sum_{t \in T \setminus \{t_0\}} m(I,t) \leq (s-1)m(I,t_0) = (s-1)(k-\lpcon(I)) \leq k.$$

In the second case, the condition $\lpcon(I) \leq 2\pp(I)$ is equivalent to $2LP(I) - m(I,t_0) \leq k$.
From the structure of the optimum half-integral solution
given by Lemma \ref{lem:norule}, we infer that $2LP(I) \geq |N(T)|$.
By Lemma \ref{lem:norule}, $|N(t_0)| = m(I,t_0)$.
Since Reduction \ref{red:common} is not applicable,
$N(t_0) \cap N(t) = \emptyset$ for $t \in T \setminus \{t_0\}$.
We infer that $|X| = 2LP(I) - m(I,t_0) \leq k$,
and Reduction \ref{red:yes} is sound.
\begin{corollary}\label{cor:nmc-cut}
  There exists an algorithm that solves a \nmcname{} instance $I$
  in $O^*(2^{\min(\lpcon(I), \frac{s-2}{s-1} \cdot k)})$ time.
  In the case of three terminals, this yields a $O^*(2^{k/2})$-time algorithm.
\end{corollary}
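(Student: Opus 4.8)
The plan is to derive the corollary directly from Theorem~\ref{thm:nmc-lp} together with Reduction~\ref{red:yes}, choosing which engine to run based on a simple comparison of the two parameters $\pp(I)$ and $\lpcon(I)$. First I would note that the reduction rules preceding the branching step do not increase $\pp(I)$ and do not increase the graph size, and Reduction~\ref{red:yes} either answers YES outright or guarantees $\lpcon(I) < \frac{s-2}{s-1}\cdot k$ and $\lpcon(I) > 2\pp(I)$; in particular, after exhaustively applying the reductions we may assume both $2\pp(I) < \lpcon(I)$ and $\lpcon(I) < \frac{s-2}{s-1}\cdot k$. The first inequality gives $\pp(I) < \lpcon(I)/2$, so running the algorithm of Theorem~\ref{thm:nmc-lp} costs $O^*(4^{\pp(I)}) = O^*(2^{2\pp(I)}) = O^*(2^{\lpcon(I)})$, which is also $O^*(2^{\frac{s-2}{s-1}\cdot k})$ by the second inequality. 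This already yields the claimed $O^*(2^{\min(\lpcon(I),\frac{s-2}{s-1}\cdot k)})$ bound.

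Next I would spell out why this really is a running time in terms of $\lpcon(I)$ and not $\pp(I)$: the point is that Reduction~\ref{red:yes} is precisely what lets us bound $\pp(I)$ from above by $\lpcon(I)/2$ whenever the instance is not trivially resolved, so the parameter $\lpcon(I)$ controls the parameter $\pp(I)$ that actually drives the branching. One subtlety to check is that Reduction~\ref{red:yes} is only stated to be sound assuming the earlier reductions are not applicable (its soundness proof uses Lemma~\ref{lem:norule}), so I would be careful to interleave it correctly: apply Reductions~\ref{red:no}, \ref{red:common}, \ref{red:eat} to exhaustion first, then check Reduction~\ref{red:yes}, and only then branch; and after each branching step the whole cycle restarts, with $s$ possibly unchanged and $\pp(I)$ dropped by at least $1/2$. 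Since $s \geq 3$ always (for $s \le 1$ the instance is trivial and for $s=2$ it is a single min-cut), the quantity $\frac{s-2}{s-1}\cdot k$ is well-defined and at least $k/2$, so the bound is never vacuous.

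Finally, for the three-terminal case I would simply substitute $s = 3$, giving $\frac{s-2}{s-1} = \frac12$, hence the running time is $O^*(2^{\min(\lpcon(I), k/2)}) = O^*(2^{k/2})$, using that $\lpcon(I) \le k$ would already follow but the $\frac{s-2}{s-1}\cdot k = k/2$ branch of the minimum is the binding one. The main thing to get right is not any calculation but the bookkeeping: ensuring the reduction rules and Reduction~\ref{red:yes} are applied in the order that makes Reduction~\ref{red:yes} sound, and confirming that when we fall through to the branching rule we genuinely have $\pp(I) < \lpcon(I)/2$ so that the $O^*(4^{\pp(I)})$ guarantee of Theorem~\ref{thm:nmc-lp} translates into the stated bound. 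I expect no real obstacle beyond this careful combination of the already-established pieces.
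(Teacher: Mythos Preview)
Your proposal is correct and follows essentially the same route as the paper: exhaustively apply Reductions~\ref{red:no}--\ref{red:eat}, then test Reduction~\ref{red:yes}; if it does not fire, the resulting inequalities $2\pp(I)<\lpcon(I)<\frac{s-2}{s-1}k$ let you bound the $O^*(4^{\pp(I)})$ cost of Theorem~\ref{thm:nmc-lp} by $O^*(2^{\lpcon(I)})$ and $O^*(2^{\frac{s-2}{s-1}k})$, and substituting $s=3$ gives the $O^*(2^{k/2})$ bound. The paper leaves this deduction implicit after establishing the soundness of Reduction~\ref{red:yes}; your write-up just makes the bookkeeping (ordering of reductions, the $s\ge 3$ assumption) explicit, and your optional re-checking of Reduction~\ref{red:yes} after each branch is harmless though not needed for the bound.
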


Finally, we would like to note that all our reduction rules
are polynomial-time and could be used in a hypothetical algorithm to
find a polynomial kernel for \nmcname{}.
Let us supply them with one additional clean-up rule.
\begin{reduction}\label{red:cleanup}
If there exists a connected component of $G$ with at most one terminal, remove it.
\end{reduction}
The following lemma shows that our reductions improve
the quadratic bound on the number of terminals due to Razgon \cite{razgon:arxiv2011}.
\begin{lemma}\label{lem:2kterm}
If Reductions \ref{red:no}, \ref{red:common}, \ref{red:eat} and \ref{red:cleanup}
are not applicable, then $|T| \leq 2k$.
\end{lemma}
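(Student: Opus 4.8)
The plan is to bound $|T|$ by $|N(T)|$ and then to bound $|N(T)|$ via the LP value, i.e.\ to establish the chain $|T| \le |N(T)| = 2\,LP(I) \le 2k$.

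The last inequality is immediate: since Reduction~\ref{red:no} is not applicable, $\pp(I) = k - LP(I) \ge 0$, so $LP(I) \le k$. For the middle equality I would invoke Lemma~\ref{lem:norule}(1): as Reductions~\ref{red:no}, \ref{red:common} and \ref{red:eat} are not applicable, the assignment putting $d_v = 1/2$ on every $v \in N(T)$ and $d_v = 0$ elsewhere is an \emph{optimal} solution to the LP-relaxation~(\ref{eq:lp}); hence $LP(I) = \tfrac12 |N(T)|$, so $|N(T)| = 2\,LP(I) \le 2k$.

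The remaining inequality $|T| \le |N(T)|$ is a short counting argument that pulls in Reductions~\ref{red:no}, \ref{red:common} and \ref{red:cleanup}. Because Reduction~\ref{red:no} does not apply, no two terminals are adjacent, so $N(t) \subseteq V \setminus T$ and hence $N(t) \subseteq N(T)$ for every $t \in T$. Because Reduction~\ref{red:cleanup} does not apply, every connected component of $G$ contains at least two terminals; in particular each terminal $t$ lies in a component with at least two vertices, and that component being connected forces $N(t) \ne \emptyset$. Because Reduction~\ref{red:common} does not apply, no vertex of $V \setminus T$ is adjacent to two distinct terminals, so the sets $\{N(t) : t \in T\}$ are pairwise disjoint subsets of $N(T)$. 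Combining, $|N(T)| \ge \sum_{t \in T} |N(t)| \ge |T|$, and chaining with the previous paragraph gives $|T| \le |N(T)| \le 2k$.

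There is no serious obstacle; the one point worth emphasising is that Lemma~\ref{lem:norule}(1) must be used to obtain the \emph{exact} identity $LP(I) = \tfrac12 |N(T)|$ rather than merely the lower bound $LP(I) \ge \tfrac12 |N(T)|$, since it is precisely this identity that converts the trivial bound $LP(I) \le k$ into a bound on the number of vertices in $N(T)$, and thence (via the disjointness and non-emptiness of the $N(t)$'s) into the bound on $|T|$.
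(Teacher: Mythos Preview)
Your proof is correct and follows essentially the same route as the paper's: invoke Lemma~\ref{lem:norule}(1) to get $|N(T)| = 2\,LP(I)$, use Reductions~\ref{red:cleanup} and~\ref{red:common} (together with the no-adjacent-terminals part of Reduction~\ref{red:no}) to obtain $|T| \le |N(T)|$, and conclude via $LP(I) \le k$ from Reduction~\ref{red:no}. You are simply more explicit than the paper about why each $N(t)$ is nonempty and lies in $V \setminus T$.
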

\begin{proof}
As noted before, the optimal half-integral solution given by Lemma \ref{lem:norule}
implies that $|N(T)| = 2LP(I)$.
However, if Reduction \ref{red:cleanup} is not applicable,
$N(t) \neq \emptyset$ for any $t \in T$, and $|T| \leq |N(T)|$ by Reduction \ref{red:common}.
We infer that $2LP(I) \geq |T|$.
If $|T| > 2k$, then $\pp(I) < 0$ and Reduction \ref{red:no} would return NO.
\end{proof}

\section{From \nmcname{} to \asatname{}}\label{sec:2sat}

We start with problem definitions.
For a graph $G$ by $\mu(G)$ we denote the size of a maximum matching in $G$.

\defparproblemu{\vcammname{}}{A graph $G=(V,E)$ and an integer $k$.}{$k$}{Does there exist a vertex cover in $G$ of size at most $\mu(G) + k$?}

\defparproblemu{\asatname{}}{A $2$-SAT formula $\Phi$ and an integer $k$.}{$k$}{
  Does there exist a set $X$ of at most $k$ clauses of $\Phi$, whose deletion
    makes $\Phi$ satisfiable?}

Now we prove that \vcammname{} is a special case of \nmcalpname{}.

\begin{theorem}\label{thm:vcamm}
There exists an algorithm that solves \vcammname{} in $O^*(4^k)$ time.
\end{theorem}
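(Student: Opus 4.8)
The plan is to exhibit a polynomial-time reduction from \vcammname{} to \nmcname{} such that the above-guarantee parameter $\pp(I)$ of the produced instance equals the parameter $k$ of the vertex cover instance, and then invoke Theorem~\ref{thm:nmc-lp}. First I would compute a maximum matching $M$ in the input graph $G$; let $\mu=|M|=\mu(G)$ and let $W$ be the set of vertices not saturated by $M$. The key structural fact I want to use is that the LP-relaxation of \vcname{} is half-integral (Nemhauser--Trotter), and that a maximum matching certifies the LP-optimum is exactly $\mu$. So I would build a \nmcname{} instance $I=(G',T,k')$ by introducing, for each edge $e=xy\in M$, a fresh terminal $t_e$ adjacent exactly to $x$ and $y$ (so every matching edge becomes a ``gadget'' with its own terminal sitting on top of it), leaving the rest of the graph intact, putting all unsaturated vertices of $W$ into the graph as ordinary non-terminal vertices, and setting $k'=\mu+k$. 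Intuitively, separating $t_e$ from the other terminals forces us to hit $x$ or $y$, which is exactly the covering requirement along $M$; the remaining edges of $G$ (between saturated endpoints, or touching $W$) must be covered too, and a short argument using maximality of $M$ shows these are handled consistently.

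The main chain of equivalences to establish is: (a) a vertex cover of $G$ of size $\le \mu+k$ yields a node multiway cut of $I$ of size $\le \mu+k=k'$, and (b) conversely a node multiway cut $X$ of $I$ of size $\le k'$ with $X\cap T=\emptyset$ can be turned into a vertex cover of $G$ of size $\le \mu+k$. Direction (a) is immediate: a vertex cover $C$ disconnects every $t_e$ from the rest, since $t_e$'s only neighbours are the endpoints of $e$ and at least one of them lies in $C$. For direction (b), I would observe that since each $t_e$ is separated from the other terminals, $X$ must contain at least one endpoint of each $e\in M$; moreover for any edge $xy$ of $G$ not in $M$, both endpoints are in $V(M)$ (by maximality of the matching, up to handling the $W$-vertices), and one checks that $X$, possibly after a local swap argument on alternating structure, covers $xy$ as well — if not, there would be an augmenting-type path yielding a terminal-connecting path avoiding $X$. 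Care is needed with vertices in $W$: they are non-terminal vertices of $G'$ and may be used in $X$, but since they are unsaturated no edge of $G$ between two $W$-vertices exists, and edges from $W$ into $V(M)$ are covered on the $V(M)$ side by the same argument.

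Next I would pin down the parameter. By Lemma~\ref{lem:norule} (or directly), once Reductions~\ref{red:no}, \ref{red:common}, \ref{red:eat} are exhausted the LP optimum is $|N(T)|/2$; but rather than running the reductions I can compute $LP(I)$ directly: the assignment $d_v=1/2$ for $v\in N(T)$ and $d_v=0$ elsewhere is feasible (every terminal-to-terminal path through a gadget uses two half-weight vertices, namely two endpoints of matching edges), and the matching $M$ itself, read as a collection of disjoint terminal-separating cuts of total size $2\mu$, certifies by LP duality / a packing argument that $LP(I)=\mu$. Hence $\pp(I)=k'-LP(I)=(\mu+k)-\mu=k$, and Theorem~\ref{thm:nmc-lp} gives an $O^*(4^{\pp(I)})=O^*(4^k)$ algorithm for $I$, hence for \vcammname{}.

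The step I expect to be the main obstacle is the ``converse'' direction (b), specifically showing that a terminal-only-avoiding multiway cut of the gadget instance really does cover \emph{all} edges of $G$, not just the matching edges — the reduction must be set up so that an uncovered non-matching edge $xy$ provably creates a forbidden terminal-to-terminal path (going $t_{e}\!-\!x\!-\!y\!-\!t_{e'}$ when $x,y$ are saturated by distinct edges $e,e'$, or a slightly longer alternating path otherwise), which forces a careful case analysis around vertices that are endpoints of the same matching edge and around $W$. A secondary subtlety is ensuring $X$ can be assumed disjoint from $T$ without loss of generality and that the LP value is exactly $\mu$ (the lower bound $LP(I)\ge\mu$ via the disjoint cuts $\{x,y\}_{xy\in M}$ needs the observation, from Reduction~\ref{red:common} being irrelevant here, that these cuts are pairwise disjoint). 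Once these are in hand, the equivalence with \asatname{} cited from \cite{gutin:vc-a-mm,razgon:icalp2008} immediately transfers the $O^*(4^k)$ bound to \asatname{} as well.
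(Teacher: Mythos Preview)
Your reduction is not correct: a node multiway cut in your gadget instance need not be (or be convertible to) a vertex cover of $G$, and no ``alternating swap'' can repair this. Take $G$ to be the $5$-cycle $a\text{--}b\text{--}c\text{--}d\text{--}e\text{--}a$ with maximum matching $M=\{ab,cd\}$, so $\mu=2$. In your $G'$ there are exactly two terminals $t_{ab}$ (adjacent to $a,b$) and $t_{cd}$ (adjacent to $c,d$), and $X=\{b,e\}$ is a multiway cut of size $2=k'$ for $k=0$: after deleting $b,e$ the terminal $t_{ab}$ sees only the isolated vertex $a$, while $t_{cd}$ sits in the component $\{c,d\}$. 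Yet the $5$-cycle has no vertex cover of size $2$, so the \vcammname{} instance with $k=0$ is a NO-instance while your \nmcname{} instance is a YES-instance. (Even more degenerately, a triangle with a one-edge matching produces a single-terminal instance whose multiway cut is empty.) The claimed augmenting-path argument cannot save direction (b): in the $5$-cycle example the uncovered edge $cd$ lies entirely inside the component of $t_{cd}$ and creates no $t_{ab}$--$t_{cd}$ path avoiding $X$. Your LP computation also breaks in disconnected cases (two disjoint edges give $LP=0$ but $\mu=2$), so $\pp(I)=k$ is not guaranteed either.

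The paper's construction avoids all of this by attaching a degree-one terminal $t_v$ to \emph{every} vertex $v$ of $G$, not just to endpoints of matching edges. Then any uncovered edge $uv$ immediately yields the forbidden path $t_u\text{--}u\text{--}v\text{--}t_v$, so the equivalence ``$X$ is a vertex cover $\Leftrightarrow$ $X$ is a multiway cut'' is a one-line observation with no case analysis. The matching is used only afterwards, to certify $LP(I')\ge\mu(G)$ via the constraints $d_u+d_v\ge 1$ for $uv\in M$, which suffices for $\pp(I')\le k$.
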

\begin{proof}
Let $I=(G=(V,E),k)$ be a \vcammname{} instance. We construct a \nmcname{} instance
$I'=(G',T,k')$ as follows. For each $v \in V$ we create a terminal $t_v$
and connect it to $v$, thus $T = \{t_v : v \in V\}$ and each terminal in $G'$
is of degree one. Moreover we take $k' = \mu(G) + k$.

We claim that $X \subseteq V$ is a vertex cover in $G$ if and only if each
connected component of $G'[(V \setminus X) \cup T]$ contains at most one terminal.
If $X \subseteq V$ is a vertex cover in $G$, $G[V \setminus X]$ is an independent
set, thus every edge in $G'[(V \setminus X) \cup T]$ is of type $t_vv$.
In the other direction, note that if $uv \in E$ and $u,v \notin X$, then
$t_u$ and $t_v$ are connected in $G'[(V \setminus X) \cup T]$.

We now show that $LP(I') \geq \mu(G)$.
Let $M$ be a maximum matching in $G$ and let $(d_v)_{v \in V}$ be an optimal
solution to the LP-relaxation (\ref{eq:lp}) for $I'$.
For each $uv \in M$, the path consisting of vertices $t_u$, $u$, $v$ and $t_v$
is in $\mathcal{P}(I')$, thus $d_u + d_v \geq 1$. As $M$ is a matching,
we infer that $\sum_{v \in v} d_v \geq |M| = \mu(G)$.

Since $LP(I') \geq \mu(G)$ and $k' = k + \mu(G)$, we have $\pp(I') \leq k$.
We apply algorithm from Theorem \ref{thm:nmc-lp} to the instance $I'$
and the time bound follows.
\end{proof}

We now reproduce the reduction from \asatname{} to \vcammname{} to prove the following
theorem.

\begin{theorem}\label{thm:asat}
There exists an algorithm that solves \asatname{} in $O^*(4^k)$ time.
\end{theorem}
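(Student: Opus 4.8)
The plan is to reduce \asatname{} to \vcammname{} in polynomial time with the parameter $k$ unchanged, and then invoke Theorem~\ref{thm:vcamm}. This reduction is classical and due to Razgon and O'Sullivan \cite{razgon:icalp2008} (see also \cite{gutin:vc-a-mm}); I would reproduce it for completeness. Given a $2$-SAT formula $\Phi$ with variables $x_1,\dots,x_n$, I build a graph $G$ as follows: for each variable $x_i$ introduce two vertices, one labelled by the literal $x_i$ and one by the literal $\lnot x_i$, joined by an edge $e_i$ (the ``variable edge''); for each clause $(\ell \lor \ell')$ of $\Phi$ introduce a gadget — e.g.\ a pair of adjacent new vertices, or simply an edge — connecting the vertex labelled $\ell$ to the vertex labelled $\ell'$ in a way that forces any small vertex cover to behave consistently. (The standard construction uses, for each clause, two fresh vertices forming an edge, one of them joined to the $\ell$-vertex and the other to the $\ell'$-vertex, so that the clause contributes exactly one edge to a fixed maximum matching.)

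Next I would pin down the maximum matching. The $n$ variable edges $e_1,\dots,e_n$ together with one edge per clause gadget form a matching $M$ of size $n + m$, where $m$ is the number of clauses; one checks this is maximum, so $\mu(G) = n + m$. The key correspondence is then: $\Phi$ admits a satisfying assignment after deleting at most $k$ clauses if and only if $G$ has a vertex cover of size at most $\mu(G) + k = n + m + k$. For the forward direction, given an assignment satisfying all but a set $D$ of $k$ clauses, I pick from each variable edge the endpoint corresponding to the \emph{false} literal, and from each satisfied clause gadget the appropriate single vertex, covering all clause edges; the unsatisfied clauses in $D$ cost one extra vertex each, giving a cover of size $n + m + |D| \le n + m + k$. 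For the converse, from a vertex cover $X$ of size $n + m + k$ I read off, for each variable, which of the two literal-vertices is \emph{not} in $X$ (using that $X$ is ``almost'' a minimum cover, so it misses roughly one endpoint per variable edge and one vertex per clause gadget), obtaining an assignment; a counting argument shows at most $k$ clause gadgets can be ``misbehaving'', and these correspond to the at most $k$ clauses one deletes.

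The main obstacle — and the only part requiring genuine care — is the converse direction: making precise the counting that turns an arbitrary size-$(\mu(G)+k)$ cover into a well-defined assignment together with a deletion set of size $\le k$. One must argue that $X$ can be assumed ``canonical'' on the variable edges (contains exactly one endpoint of each $e_i$) at no extra cost, handle variables whose both literal-vertices lie in $X$ or neither does, and verify that the slack of $k$ above $\mu(G)$ bounds precisely the number of clause gadgets that are not satisfied by the induced assignment. Once this is checked, the construction is polynomial-time and preserves the parameter, so running the $O^*(4^k)$ algorithm of Theorem~\ref{thm:vcamm} on the instance $(G,k)$ decides \asatname{} within the same time bound, and the theorem follows.
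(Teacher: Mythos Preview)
Your overall strategy --- reduce \asatname{} to \vcammname{} with parameter preserved, then invoke Theorem~\ref{thm:vcamm} --- is exactly what the paper does. However, the concrete reduction you sketch is \emph{not} parameter-preserving, and the gap is precisely at the point you flagged as the main obstacle.

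Take the formula $\Phi$ on one variable $x$ with four clauses $(x\vee x),(\neg x\vee\neg x),(x\vee x),(\neg x\vee\neg x)$. Any assignment falsifies two clauses, so the \asatname{} optimum is $2$. In your graph $G$ (one literal pair $x,\neg x$; for each clause $C$ a path $\ell\text{--}a_C\text{--}b_C\text{--}\ell'$) we have $n=1$, $m=4$, $\mu(G)=5$. But $X=\{x,\neg x,a_1,a_2,a_3,a_4\}$ is a vertex cover of size $6=\mu(G)+1$: the edges $b_i\text{--}x$ and $b_i\text{--}\neg x$ are covered by $x$ or $\neg x$, and each $a_ib_i$ by $a_i$. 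So your $(G,k)$ says YES for $k=1$ while $(\Phi,1)$ is a NO-instance. The failure mode is that putting \emph{both} literal vertices of a single variable into the cover lets every incident clause gadget get away with one vertex, so one unit of ``excess'' above $\mu(G)$ can absorb arbitrarily many unsatisfied clauses touching that variable. Your proposed counting argument (``at most $k$ clause gadgets misbehave'') cannot be rescued here: no clause gadget has both $a_C,b_C\in X$, yet two clauses are unsatisfied.

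The paper's construction avoids this by \emph{duplicating} literal vertices per occurrence: for each variable $x$ with $n(x)$ occurrences it creates vertex sets $V(x),V(\neg x)$ of size $n(x)$ each, joined by a complete bipartite graph, and each clause becomes a single edge between two occurrence vertices. The bipartite clique forces any vertex cover to contain all of $V(x)$ or all of $V(\neg x)$, which yields a well-defined assignment with $|Y_1|=\mu(G)$; and because distinct clause edges are vertex-disjoint (each occurrence vertex lies in exactly one clause edge), every unsatisfied clause consumes its own private vertex of $Y\setminus Y_1$. That disjointness is the missing ingredient in your gadget.
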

\begin{proof}
Let $I=(\Phi, k)$ be an \asatname{} instance.
First, we replace each clause $C \in \Phi$ that consists of a single literal $l$
with a clause $(l \vee l)$. From now we assume that each clause of $\Phi$ consists
of two, possibly equal, literals.

Let $x$ be a variable of $\Phi$. By $n(x)$ we denote the number of occurrences of the variable
$x$ in the formula $\Phi$ (if $l = x$ or $l = \neg x$, a clause $(l \vee l)$
counts as two occurrences). Let us arbitrarily number those occurrences
and for any $1 \leq i \leq n(x)$, by $C(x,i)$ we denote the clause where $x$ occurs
the $i$-th time.

We now construct a \vcammname{} instance $I'=(G,k)$.
For each variable $x$ and for each $1 \leq i \leq n(x)$ we create
two vertices $v(x,i)$ and $v(\neg x, i)$. For $l \in \{x,\neg x\}$ we denote
$V(l) = \{v(l,i): 1 \leq i \leq n(x)\}$. For each variable $x$
and for each $1 \leq i,j \leq n(x)$ we connect $v(x,i)$ and $v(\neg x, j)$ by an edge,
    i.e., we make a full bipartite subgraph with sides $V(x)$ and $V(\neg x)$.

Furthermore, if $C(x,i) = C(y,j)$ for some variables $x, y$ and indices
$1 \leq i \leq n(x)$, $1 \leq j \leq n(y)$ (possibly $x=y$, but $(x,i) \neq (y,j)$),
we introduce an edge $v(l_x,i)v(l_y,j)$, where $C(x,i) = C(y,j) = (l_x \vee l_y)$,
$l_x$ is the $i$-th occurrence of $x$ and $l_y$ is the $j$-th occurrence of $y$.
Such an edge is called a {\em{clause edge}}. Note that we introduce exactly one
clause edge for each clause of $\Phi$ and no two clause edges share an endpoint in $G$.

We claim that $I$ is an \asatname{} YES-instance if and only if $I'$ is a \vcammname{} YES-instance.
First note that $G$ has a perfect matching consisting of all edges of the type $v(x,i)v(\neg x, i)$.

Assume $I$ is a YES-instance.
Let $X \subseteq \Phi$ be a set of clauses, such that there exists a truth assignment $\phi$ of all variables of $\Phi$
that satisfies all clauses of $\Phi \setminus X$.
We now construct a vertex cover $Y$ of $G$. For each variable $x$ and for
each index $1 \leq i \leq n(x)$, we take into $Y$ the vertex $v(x,i)$ if $x$ is true
in the assignment $\phi$, and $v(\neg x,i)$ otherwise. Moreover, for each clause
$C \in X$ we take into $Y$ any endpoint of the clause edge for $C$.

Clearly $|Y| \leq \mu(G) + |X|$. Each non-clause edge $v(x,i)v(\neg x,j)$ is covered
by $Y$, as $v(x,i) \in Y$ if $x$ is true in $\phi$, and $v(\neg x,j) \in Y$ otherwise.
Let $e_C = v(l_x,i)v(l_y,j)$ be a clause edge for clause $C=(l_x \vee l_y)$.
If $C \in X$, then one of the endpoints of $e_C$ is chosen into $Y$.
Otherwise, $l_x$ or $l_y$ is true in $\phi$ and the corresponding vertex is chosen
into $Y$.

In the other direction, let us assume that $I'$ is a YES-instance
and let $Y$ be a vertex cover of $G$. We construct a truth assignment $\phi$
as follows. Let $x$ be a variable of $\Phi$.
Recall that $G$ has a complete bipartite subgraph with sides $V(x)$ and $V(\neg x)$.
Thus $V(l) \subseteq Y$ for some $l \in \{x,\neg x\}$, and we take $l$ to be true
in $\phi$ (if $V(x) \cup V(\neg x) \subseteq Y$, we choose whether $x$ is true or false
arbitrarily). Let $X$ be the set of clauses of $\Phi$ that are not satisfied by $\phi$.
We claim that $|X| \leq |Y| - \mu(G)$.

Let $Y_1$ be the union of all sets $V(l)$ for which $l$ is true under $\phi$.
Obviously $Y_1 \subseteq Y$ and $|Y_1| = \mu(G)$. Let $Y_2 = Y \setminus Y_1$.
Take any $C \in X$. As $C$ is not satisfied by $\phi$, the clause edge $e_C$
corresponding to $C$ does not have an endpoint in $Y_1$. Since $Y$ is a vertex cover
in $G$, $e_C$ has an endpoint in $Y_2$. Finally, recall that no two clause edges
share an endpoint. This implies that $|Y_2| \geq |X|$ and $|X| \leq |Y| - \mu(G)$.

We infer that the instances $I$ and $I'$ are equivalent. As the above construction
can be done in polynomial time, the running time follows from Theorem \ref{thm:vcamm}.
\end{proof}

\section{Hardness of \nmulticutname{} parameterized above LP-relaxation}\label{sec:multicut}

Recall the definition of \nmulticutname{}, which is a natural generalization of \nmcname{}.

\defproblemu{\nmulticutname{}}{A graph $G=(V,E)$, a set $\mathcal{T}$ of pairs of terminals, and an integer $k$.}{Does there exist a set $X$ of at most $k$ non-terminal vertices,
whose removal disconnects all pairs of terminals in $\mathcal{T}$?}

The LP-relaxation (\ref{eq:lp}) for \nmcname{} naturally generalizes to \nmulticutname{} as follows.
Let $T$ be the set of all terminals in the given \nmulticutname{} instance $I = (G,\mathcal{T},k)$.
In the LP-relaxation we ask for an assignment of non-negative weights $(d_v)_{v \in V \setminus T}$,
such that for each pair $(s,t) \in \mathcal{T}$ the distance between $s$ and $t$ with respect to the weights $(d_v)_{v \in V \setminus T}$
is at least one. Clearly, if $X$ is a solution to $I$, an assignment that takes $d_v = 1$ if $v \in X$ and $d_v = 0$ otherwise,
is a feasible solution to the LP-relaxation. Let $LP(I)$ be the cost of an optimal solution to this LP-relaxation.
We denote by \nmulticutalpname{} the \nmulticutname{} problem parameterized by $\pp(I)=k-LP(I)$, i.e., parameterized above LP lower bound.

In this section we prove that \nmulticutalpname{} does not even belong to $XP$, by the following lemma.
\begin{lemma}
\nmulticutalpname{}, restricted to instances where $\pp(I) = 0$, is NP-hard.
\end{lemma}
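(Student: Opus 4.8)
The plan is to reduce from a well-known NP-hard problem and engineer the instance so that the LP-relaxation has an integral optimum equal to the intended budget $k$, forcing $\pp(I)=0$. A natural starting point is \nmulticutname{} itself (or its edge version), which is NP-hard already for a fixed number of terminal pairs, but the subtlety is that a generic hard instance need not satisfy $\pp(I)=0$: the LP could be strictly below $k$. So the real content is a gadget that pins down the LP value. First I would take an arbitrary \nmulticutname{} instance $I_0=(G_0,\mathcal{T}_0,k_0)$ and attach, for each vertex $v$ that could possibly be used in a solution, a small ``enforcement gadget'' --- for instance a path of length two between two fresh terminals $a_v,b_v$ passing through a fresh vertex $v'$, together with an edge making $v'$ and $v$ interchangeable (or simply routing the new terminal pair through $v$ itself). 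Adding one such pair per vertex contributes exactly $|V(G_0)|$ to every feasible LP solution if the gadgets are vertex-disjoint, but that overshoots; the point is rather to choose the gadgets so that the LP \emph{value} equals the \emph{new} budget.

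A cleaner route, which I would actually pursue, is to start from a problem whose LP relaxation is already known to be integral or half-integral in a controlled way, and then ``blow up'' weights. Concretely: take an NP-hard instance, scale by adding dummy terminal pairs each realized by a single private non-terminal vertex lying on a length-two path between two private terminals. Each such pair forces $d$ of mass $1$ on its private vertex in \emph{any} feasible solution, and a solution must delete that vertex; thus $m$ dummy pairs raise both $LP$ and the optimum by exactly $m$. By choosing $m$ large enough --- say $m = k_0 \cdot |V(G_0)|$ or whatever makes the arithmetic work --- the relative ``slack'' $k - LP(I)$ coming from the original part can be driven to zero, provided the original part's LP slack was bounded by a function we control. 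The key algebraic step is: if originally $LP(I_0) \ge k_0 - c$ for a constant (or polynomially bounded) $c$ that we can compute, we pad with $c$ extra forced pairs so that in the padded instance $LP = k$ exactly. Then a solution of size $\le k$ exists iff the original instance is a YES-instance, and $\pp = 0$ by construction.

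The steps in order: (1) fix a concrete NP-hard source problem --- I would use \nmulticutname{} with three terminal pairs, whose NP-hardness is classical; (2) compute or lower-bound its LP optimum and observe the (integer) gap to $k_0$ is at most $k_0$; (3) describe the padding gadget (a private length-two path per forced unit, with its midpoint being the non-terminal that any solution is compelled to take, and which the LP is compelled to weight $1$); (4) verify feasibility/soundness of the reduction: a YES-instance of the source plus the $c$ forced midpoints gives a solution of size $k_0 + c = k$, and conversely any size-$k$ solution must spend exactly one vertex on each forced gadget, leaving a size-$k_0$ solution for the source; (5) compute $LP$ of the padded instance: it is $LP(I_0) + c$ on one hand (extend an optimal LP solution by $1$ on each midpoint) and at least $k$ on the other (the source part contributes $\ge k_0 - c$... careful here), and reconcile so that it equals exactly $k$; conclude $\pp = 0$.

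The main obstacle is precisely step (5): making the LP value land \emph{exactly} at $k$ rather than merely at most $k$. Getting $LP \le k$ is easy (the solution bounds it); getting $LP \ge k$ requires a genuine lower-bound argument on the LP of the padded instance, which means I cannot simply pad with disjoint gadgets and hope --- I must ensure the source part of the graph, on its own, forces LP mass at least $k_0$. If the source instance has an LP optimum strictly below $k_0$, no amount of disjoint padding fixes the \emph{ratio} issue while keeping $\pp=0$; instead I should \emph{choose} the source budget to be $k_0 := \lceil LP(I_0) \rceil$ (which is still NP-hard to beat for the right source, or switch to a source that is NP-hard with half-integral tight LP, e.g.\ via the equivalence to \asatname{}-like structures), and only then pad the fractional part away with forced gadgets. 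So the crux is selecting a source reduction in which the LP is tight (integral) against the decision budget; with that in hand the padding and the $\pp=0$ bookkeeping are routine. I would therefore spend the bulk of the write-up exhibiting such a tight source instance and verifying the two-sided LP bound for the padded graph.
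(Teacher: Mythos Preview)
Your proposal correctly identifies the crux --- forcing $LP(I)$ to equal $k$ exactly --- but the padding strategy you outline cannot close that gap. Disjoint forced gadgets add the same amount to $LP$, to the optimum, and to the budget, so the difference $k - LP$ is invariant under padding. Thus your step (5) is not a bookkeeping issue: to make $\pp=0$ after padding you already need $\pp=0$ \emph{before} padding. Your fallback, ``set $k_0 := \lceil LP(I_0)\rceil$ and argue this restricted source is still NP-hard'', is precisely the statement you are trying to prove; you have not exhibited any NP-hard problem whose natural reduction to \nmulticutname{} lands with $LP$ equal to the budget, and you cannot assume one exists.

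The paper sidesteps this circularity by reducing from {\sc{Multicoloured Independent Set}} and building the LP control directly into the gadget rather than trying to patch it afterwards. For each vertex $v$ of the source graph it introduces a companion $v'$ and a dedicated terminal pair $(s_v,s_v')$ with $s_v$ adjacent only to $v$ and $s_v'$ only to $v'$; this forces $d_v + d_{v'} \ge 1$ in every feasible LP solution, so $LP \ge n$. Simultaneously the all-$1/2$ assignment on $V \cup \{v':v\in V\}$ is feasible, so $LP = n$, and the budget is set to $n$. The actual hardness is then encoded by \emph{additional} terminal pairs (all pairs $(t_u,t_v)$ among singleton terminals attached to each $u\in V$, plus one pair per colour class along an auxiliary path through the $v'$ vertices) that do not disturb the LP value but force any integral size-$n$ solution to pick, for each $v$, exactly one of $\{v,v'\}$, and to do so consistently with an independent set. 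The idea you are missing is that the LP value should be pinned down by a layer of private, pairwise-disjoint constraints that saturate the budget, while the combinatorial hardness rides on top via extra terminal pairs that the half-integral optimum already satisfies for free.
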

\begin{proof}
We reduce from {\sc{Multicoloured Independent Set}} which is NP-complete (see~\cite{mis-np-hard}). In this problem we are given a graph $G=(V,E)$ together with a partition of
the vertex set into sets $V_1, V_2, \ldots, V_r$, such that $G[V_i]$ is a clique for $1 \leq i \leq r$, and we are to decide
whether $G$ contains an independent set of size $r$. Note that such an independent set needs to take exactly one vertex from each set $V_i$.
W.l.o.g. we may assume that $|V_i| \geq 2$ for each $1 \leq i \leq r$. Let $|V|=n$ and let $I$ be the given {\sc{Multicoloured Independent Set}} instance.

We construct a \nmulticutname{} instance $I'=(G',\mathcal{T},n)$ as follows. We start with the graph $G$.
Then, for each $v \in V$ we create a vertex $v'$ and connect it to $v$. For each set $V_i$, we connect the vertices $\{v': v \in V_i\}$
into a path $P_i$ in an arbitrary order. We now add terminal pairs. Each terminal will be of degree one in the graph $G'$.

First, for each $v \in V$ we create a terminal $t_v$ connected to $v$ and we include in $\mathcal{T}$ all pairs $(t_v,t_u)$ for $u,v \in V$, $u \neq v$.
Second, for each $v \in V$ we create terminals $s_v$ and $s_v'$, connected to $v$ and $v'$ respectively, and include $(s_v,s_v')$ in $\mathcal{T}$.
Finally, for each set $V_i$, we create terminals $a_i$ and $b_i$, connected to two endpoints of the path $P_i$, and include $(a_i, b_i)$ in $\mathcal{T}$.
This finishes the construction of the instance $I'$.

First note that for each $(s,t) \in \mathcal{T}$, we have $N(s) \cap N(t) = \emptyset$, due to the assumption $|V_i| \geq 2$ for each $1 \leq i \leq r$.
Thus an assignment that takes $d_v = d_{v'} = 1/2$ for each $v \in V$ is a feasible solution to the LP-relaxation of cost $n$. Moreover, it is an optimal
solution, as $d_v + d_{v'} \geq 1$ for each $v \in V$ due to the terminal pair $(s_v, s_v')$. Thus $LP(I') = n$.

Assume $I$ is a YES-instance and let $X \subseteq V$ be an independent set of size $r$ in $G$. Take $X' = \{v': v \in X\}$ and $Y = (V \setminus X) \cup X'$. Clearly
$|Y| = n$. To see that $Y$ is a solution to the instance $I'$ observe that $V\setminus X$ is a vertex cover of $G$.

In the other direction, let $Y$ be a solution to the instance $I'$. $Y$ needs to include $v$ or $v'$ for each $v \in V$, due to the terminal pair $(s_v,s_v')$.
Thus $|Y|=n$ and $Y$ includes exactly one vertex from the set $\{v,v'\}$ for each $v \in V$. Moreover, for each $V_i$, if $u',v' \in Y$, $u,v \in V_i$, then
$Y$ does not disconnect $t_u$ from $t_v$. On the other hand, if $V_i \subseteq Y$, then $Y$ does not intersect $P_i$ and the pair $(a_i,b_i)$ is not disconnected by $Y$.
We infer that for each $1 \leq i \leq r$ there exists a vertex $v_i \in V_i$, such that $(V_i \setminus \{v_i\}) \cup \{v_i'\} \subseteq Y$.
Moreover, if $v_iv_j \in E$ for some $1 \leq i < j \leq r$, then the pair $(t_{v_i},t_{v_j})$ is not disconnected by $Y$. We infer that $\{v_i: 1 \leq i \leq r\}$ is an independent set in $G$,
and the instances $I$ and $I'$ are equivalent.
\end{proof}

\section{Conclusions}

In this paper, building upon work of Xiao \cite{xiao:multiway2010} and Guillemot
\cite{guillemot:iwpec2008}, we show that \nmcname{} is fixed-parameter tractable when
parameterized above two lower bounds: largest isolating cut and the cost of the optimal
solution of the LP-relaxation.
We also believe that our results may be of some importance in resolving the question
of an existence of a polynomial kernel for \nmcname{}.

One of the tools used in the parameterized complexity is the notion of important
separators introduced by Marx in 2004~\cite{marx-separators}.
From that time important separators were used for proving several problems
to be in FPT, including {\sc Multiway Cut}~\cite{marx-separators},{\sc Directed Feedback Vertex Set}~\cite{directed-fvs},
{\sc Almost 2-SAT}~\cite{razgon:icalp2008} and {\sc Multicut}~\cite{marx:multicut,thomasse:multicut}.
In this paper we show that in the \nmcname{} problem half-integral solutions
of the natural LP-relaxation of the problem can be even more useful than
important separators. Is it possible to use linear programming in other graph separation problems,
for example to obtain a $O^*(c^k)$ algorithm for {\sc{Directed Feedback Vertex Set}}?

We have shown that \nmulticutname{} parameterized above LP-relaxation is not in $XP$.
Is the edge-deletion variant similarly difficult?

\section*{Acknowledgements}

We thank Saket Saurabh for pointing us to~\cite{guillemot:iwpec2008}.

\bibliographystyle{plain}
\bibliography{multiway-above-lower-arxiv-20110708}

\end{document}